\renewcommand{\leq}{\leqslant}
\renewcommand{\geq}{\geqslant}
\newcommand{\NN}{\mathbb{N}}
\newcommand{\TT}{\mathcal{T}}
\newcommand{\median}{\tilde{k}}
\newcommand{\percentilp}{\ensuremath{x_p}}
\begin{document}

\title{The median of the distance between two leaves in a phylogenetic tree}

\author{Arnau Mir\inst{1} \and Francesc Rossell\'o\inst{1}}

\institute{Department of Mathematics and Computer Science, University of the
Balearic Islands, E-07122 Palma de Mallorca,
\texttt{\{arnau.mir,cesc.rossello\}@uib.es}}

\maketitle

\begin{abstract}
We establish a limit formula for the median of the distance between two leaves in a fully resolved unrooted phylogenetic tree with $n$ leaves. More precisely, we prove that this median is equal, in the limit, to $\sqrt{4\ln(2) n}$.
\end{abstract}

\section{Introduction}
The definition and study of metrics for the comparison of phylogenetic trees is a classical problem in phylogenetics \cite[Ch.~30]{Fel04}, motivated, among other applications, by the need to compare alternative phylogenies for a given set of organisms obtained from different
datasets  or using  different methods.  Many metrics for the
comparison of rooted or unrooted phylogenetic trees on the same set of taxa have been proposed so far. Some of the most popular such metrics are based
on the comparison of the vectors of distances between pairs of
taxa in the corresponding trees.
But, in contrast with other metrics, the statistical properties of these metrics  are mostly unknown.

 Steel and Penny \cite{StePen93} computed the mean  value of the square of the metric  for  fully resolved unrooted trees defined through the euclidean distance between their vectors of distances (they called it the \emph{path difference metric}).
One of the main ingredients in their work was the explicit computation of the mean value  and the variance  of the distance $d$ between two leaves in a fully resolved  unrooted phylogenetic tree with $n$ leaves, obtaining that
 $$
\mu(d)=\frac{2^{2(n-2)}}{\binom{2(n-2)}{n-2}}\sim \sqrt{\pi n},\quad
\mbox{Var}(d)=4n-6-\mu(d)-\mu(d)^2
$$

In this work we continue the statistical analysis of this random variable $d$, by giving an expression for its median that allows the derivation of a limit formula for it. We hope our result will constitute a first step towards obtaining a formula for the median of the aforementioned squared path difference metric between fully resolved  unrooted phylogenetic trees, a problem that still remains open.

\section{Preliminaries}

In this paper, by a  \emph{phylogenetic tree} on a set $S$  we mean a \emph{fully resolved}  (that is, with   all its internal nodes of degree 3) unrooted tree  with its leaves bijectively labeled in  the set $S$.   Although in practice $S$ may be any set of taxa, to fix ideas  we shall always take $S=\{1,\ldots,n\}$, with $n$ the number of leaves of the tree, and we shall use the term \emph{phylogenetic tree with $n$ leaves} to refer to a phylogenetic tree on this set.  For simplicity, we shall always identify a leaf of a phylogenetic tree with its label. 
 
Let $\TT_n$ be the set of (isomorphism classes of) phylogenetic trees with $n$ leaves. It is well known \cite{Fel04} that $|\TT_1|=|\TT_2|=1$ and
$|\TT_n|=(2n-5)!!=(2n-5)(2n-7)\cdots 3 \cdot 1$, for every $n\geq 3$.

\section{Main result}

Let $k,l\in S=\{1,\ldots,n\}$ be any two different labels of trees in $\TT_n$. 
 The \emph{distance} $d_T(k,l)$ between the leaves $k$ and $l$ in a phylogenetic tree $T\in \TT_n$ is the length of the unique path between them.
Let's consider the random variable 
$$
d_{kl}=\mbox{distance between the labels $k$ and $l$ in one tree in ${\TT}_n$.}
$$
The possible values of $d_{kl}$ are $1,2,\ldots,n-1$. 

Our goal is to estimate the value  $\mbox{median}(n)$ of the median of this variable~$d_{kl}$ on $\TT_n$ when the tree and the leaves are chosen equiprobably.  In this case, $d_{kl} =d_{12}$, and thus we can reduce our problem to compute the median of the variable~$d:=d_{12}$.

For every $i=1,\ldots, n-1$, let $c_i$ be the cardinal of $\{T\in {\TT}_n\mid d_T(1,2)=i\}$. Arguing as in  \cite[p.~140]{StePen93}, we have the following result.

\begin{lemma}\label{lem1}
$c_{n-1}=(n-2)!$ and, for every $i=1,\ldots,n-2$,
$$
c_i = (n-2)! \dfrac{(i-1) (n-1)\cdots (2n-i-4)}{(2(n-i-1))!!}=\frac{(i-1)(2n-i-4)!}{(2(n-i-1))!!}.
$$
\end{lemma}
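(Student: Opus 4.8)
The plan is to decompose a tree $T\in\TT_n$ with $d_T(1,2)=i$ along its unique $1$--$2$ path and count the ways to decorate that path. Write the path as $1=v_0,v_1,\dots,v_i=2$. Since $T$ is fully resolved, each of the $i-1$ interior vertices $v_1,\dots,v_{i-1}$ has degree $3$: two incident edges lie on the path and the third leads to a pendant subtree hanging off the path. These $i-1$ pendant subtrees are pairwise disjoint, contain no path vertex, and between them carry exactly the $n-2$ leaves of $\{3,\dots,n\}$; each is nonempty, as otherwise the corresponding $v_j$ would have degree $2$. Rooted at its attaching edge, a pendant subtree on $m$ leaves is a rooted binary phylogenetic tree whose root has two children (just a single leaf when $m=1$), and classically there are $(2m-3)!!$ of these, with the convention $(-1)!!=1$.

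First I would record $c_i$ as a weighted sum over the way the $n-2$ leaves split among the subtrees. The positions $v_1,\dots,v_{i-1}$ are distinguishable (by their distance from leaf $1$), so if $v_j$ receives $m_j$ leaves then
$$
c_i=\sum_{\substack{m_1+\cdots+m_{i-1}=n-2\\ m_j\ge 1}}\frac{(n-2)!}{m_1!\cdots m_{i-1}!}\prod_{j=1}^{i-1}(2m_j-3)!!,
$$
where the multinomial coefficient chooses which leaves go where and the product chooses the shapes. The case $i=n-1$ is immediate: it forces every $m_j=1$, the product is $1$, and the sum counts the $(n-2)!$ orderings of $\{3,\dots,n\}$ along the path, giving $c_{n-1}=(n-2)!$.

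The heart of the proof is to evaluate this sum in closed form, and for that I would pass to exponential generating functions. As an ordered sequence of $i-1$ rooted binary trees on $n-2$ labelled leaves,
$$
c_i=(n-2)!\,[x^{n-2}]\,B(x)^{\,i-1},\qquad B(x)=\sum_{m\ge 1}(2m-3)!!\,\frac{x^m}{m!}.
$$
The key identity is $B(x)=1-\sqrt{1-2x}$, which drops out of the binomial series of $(1-2x)^{1/2}$ once one checks that $\binom{1/2}{m}(-2)^m=-(2m-3)!!/m!$ for $m\ge1$. Writing $u=1-\sqrt{1-2x}$ inverts to $x=u-\tfrac12u^2=u/(1-u/2)$, so Lagrange--B\"urmann inversion with kernel $(1-u/2)^{-1}$ yields
$$
[x^{N}]\,u^{k}=\frac{k}{N}\,[u^{N-k}](1-u/2)^{-N}=\frac{k}{N}\binom{2N-k-1}{N-k}\frac{1}{2^{\,N-k}}.
$$
Putting $N=n-2$ and $k=i-1$ turns $2N-k-1$ into $2n-i-4$ and $N-k$ into $n-i-1$; expanding the binomial and using $(2m)!!=2^{m}m!$ with $m=n-i-1$ to absorb the power of two into a double factorial should reproduce both displayed forms of $c_i$.

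The delicate step is precisely this coefficient extraction: everything rests on recognizing $B(x)=1-\sqrt{1-2x}$ and on keeping the shifted indices of Lagrange inversion straight, so that no sign or off-by-one slips in while $2N-k-1$ and $N-k$ acquire their values. For readers who prefer to avoid generating functions, the same identity can be obtained by induction on $n$ via the leaf-insertion bijection: each tree is built by inserting leaves $3,4,\dots,n$ in turn into the edges of the growing tree, and inserting leaf $k$ on one of the $i-1$ edges of the current $1$--$2$ path raises the distance to $i$, whereas inserting it on one of the remaining $2k-5-i$ edges leaves the distance at $i$; this gives the recurrence $c_i^{(k)}=(i-1)\,c_{i-1}^{(k-1)}+(2k-5-i)\,c_i^{(k-1)}$, which the claimed closed form can be verified to satisfy. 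In either route the combinatorial decomposition is routine and the arithmetic identity is the real crux.
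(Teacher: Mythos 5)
Your proof is correct, and it is genuinely more self-contained than the paper's. Both arguments ultimately rest on the same identity $c_i=(n-2)!\,[x^{n-2}]B(x)^{i-1}$ with $B(x)=1-\sqrt{1-2x}$, but the paper simply imports this from Steel and Penny and then quotes, without derivation, the series expansion of $B(x)^{i-1}$, reading off the coefficient of $x^{n-2}$. You instead prove the identity from scratch---the decomposition along the $1$--$2$ path into $i-1$ nonempty pendant rooted binary trees, the multinomial sum, and its exponential-generating-function packaging are all sound, as is the check $\binom{1/2}{m}(-2)^m=-(2m-3)!!/m!$---and you then extract the coefficient by Lagrange--B\"urmann inversion rather than by expanding $B(x)^{i-1}$ term by term. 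The inversion step checks out: with $N=n-2$ and $k=i-1$ it gives $c_i=(n-2)!\,\frac{i-1}{n-2}\binom{2n-i-4}{n-i-1}2^{-(n-i-1)}=\frac{(i-1)(2n-i-4)!}{(2(n-i-1))!!}$, exactly the claimed closed form, and it also returns $c_1=0$ and $c_{n-1}=(n-2)!$ correctly. One slip to fix: from $u=1-\sqrt{1-2x}$ one gets $x=u-\frac12 u^2=u\,(1-u/2)$, not $u/(1-u/2)$; since this is equivalent to $u=x\,(1-u/2)^{-1}$, the kernel $(1-u/2)^{-1}$ and the formula $[x^N]u^k=\frac{k}{N}[u^{N-k}](1-u/2)^{-N}$ that you actually apply are the right ones, so the error is purely notational. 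What your route buys is independence from the citation: the paper's displayed expansion of $B(x)^{i-1}$ is itself left unproved there, and would most naturally be justified by exactly your inversion (or an equivalent binomial-series manipulation). What the paper's route buys is brevity, at the cost of pushing both the combinatorics and the series identity onto the reference.
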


\begin{proof}
Consider  the  function
$B(x)=1-\sqrt{1-2x}.$  
By \cite[p.~140]{StePen93}, we have that
$c_i =\frac{\partial^{n-2}}{\partial x^{n-2}} \left( B(x)^{i-1}\right)\vrule_{x=0}.$
Using that
$$
B(x)^{i-1}\hspace*{-2ex}=x^{i-1}+\frac{i-1}{2} x^i+\frac{(i-1) (i+2)}{8} x^{i+1}+\cdots +
\frac{(i-1) (i+l)\ldots (i+2l-2)}{(2l)!!} x^{i-1+l}+\ldots,
$$
we obtain the formulas in the statement.\qed
\end{proof}

\begin{lemma}\label{lem2}
For every $k=1,\ldots,n-1$,
$\displaystyle \frac{1}{(2n-5)!!}\sum_{i=1}^k c_i =1-\frac{2^k (n-3)! (-k+2 n-4)!}{2 (2 n-5)! (-k+n-2)!}.$
\end{lemma}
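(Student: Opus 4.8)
The plan is to prove the identity by induction on $k$ (equivalently, to recognize the partial sums as a telescoping series). Write $R_k$ for the claimed right-hand side, so that the goal is $\frac{1}{(2n-5)!!}\sum_{i=1}^k c_i = R_k$, where $R_k = 1-\frac{2^k(n-3)!(2n-4-k)!}{2(2n-5)!(n-2-k)!}$. The first move is to trade double factorials for ordinary ones via $(2m)!!=2^m m!$; in particular $(2(n-i-1))!!=2^{n-i-1}(n-i-1)!$ and $(2n-5)!!=\frac{(2n-4)!}{2^{n-2}(n-2)!}$, so that Lemma~\ref{lem1} reads $c_i=\frac{(i-1)(2n-i-4)!}{2^{n-i-1}(n-i-1)!}$ and the normalized term $\frac{c_i}{(2n-5)!!}$ becomes a ratio of factorials and powers of $2$ that is easy to manipulate.

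For the base case $k=1$ I would note that $c_1=0$ because of the factor $i-1$, while $R_1=1-\frac{2(n-3)!(2n-5)!}{2(2n-5)!(n-3)!}=0$, so both sides vanish. For the inductive step it then suffices to verify the single-term identity $\frac{c_{k+1}}{(2n-5)!!}=R_{k+1}-R_k$, which is the heart of the computation.

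Carrying out that step, I would factor $\frac{2^k(n-3)!}{2(2n-5)!}$ out of $R_{k+1}-R_k$, leaving the bracket $\frac{(2n-4-k)!}{(n-2-k)!}-\frac{2(2n-5-k)!}{(n-3-k)!}$. Pulling out $\frac{(2n-5-k)!}{(n-3-k)!}$ and using the cancellation $(2n-4-k)-2(n-2-k)=k$ collapses this bracket to $\frac{k(2n-5-k)!}{(n-2-k)!}$, so that $R_{k+1}-R_k=\frac{2^k(n-3)!}{2(2n-5)!}\cdot\frac{k(2n-5-k)!}{(n-2-k)!}$. On the other side, the rewritten Lemma~\ref{lem1} gives $\frac{c_{k+1}}{(2n-5)!!}=\frac{2^k\,k\,(2n-5-k)!\,(n-2)!}{(n-2-k)!\,(2n-4)!}$. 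After cancelling the common factor $\frac{2^k k(2n-5-k)!}{(n-2-k)!}$, the desired equality reduces to the elementary relation $\frac{(n-3)!}{2(2n-5)!}=\frac{(n-2)!}{(2n-4)!}$, which holds since $(2n-4)!=(2n-4)(2n-5)!$, $(n-2)!=(n-2)(n-3)!$, and $2n-4=2(n-2)$.

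The one point that needs care, and which I regard as the main obstacle, is the top value $k=n-1$. The inductive step introduces the factorial $(n-3-k)!$, which is legitimate only for $k\le n-3$, so the induction by itself establishes the formula for $k=1,\ldots,n-2$, and at $k=n-1$ the stated right-hand side contains $(-k+n-2)!=(-1)!$. I would dispose of this case directly by observing that $\sum_{i=1}^{n-1}c_i$ counts all trees in $\TT_n$, hence equals $(2n-5)!!$, so the left-hand side is $1$; this matches $R_{n-1}$ under the standard convention $1/(-1)!=0$ (equivalently $\Gamma(0)^{-1}=0$), which makes the subtracted term vanish.
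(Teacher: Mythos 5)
Your proof is correct, but it takes a different route from the paper. The paper derives the closed form constructively: after the same conversion of double factorials to ordinary factorials, it applies Gosper's algorithm (citing \cite[Chap.~5]{PetWilZei96}) to the sum $S_k=\sum_{i=1}^{k-1}\frac{i\,2^{i+1}(2n-i-5)!}{(n-i-2)!}$, finding polynomials $a,b,c,x$ that certify a hypergeometric antidifference, and then pins down the summation ``constant'' $g(n)$ from the special case $k=2$. You instead take the stated right-hand side $R_k$ as given and verify it by induction, checking the single-term identity $\frac{c_{k+1}}{(2n-5)!!}=R_{k+1}-R_k$; this is exactly the verification of a telescoping certificate, so mathematically the two proofs are close cousins (Gosper's algorithm is the systematic way of producing such certificates), but yours is more elementary and self-contained --- it needs no external machinery, only factorial cancellations such as $\frac{(n-3)!}{2(2n-5)!}=\frac{(n-2)!}{(2n-4)!}$. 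What the paper's approach buys is discovery: it would produce the closed form without knowing it in advance, which is presumably how the authors found it. What your approach buys, besides brevity, is rigor at the boundary: you correctly note that the inductive step only reaches $k=n-2$ because $(n-3-k)!$ must make sense, and you dispose of $k=n-1$ directly via $\sum_{i=1}^{n-1}c_i=|\TT_n|=(2n-5)!!$ together with the convention $1/(-1)!=0$. The paper's proof silently suffers the same degeneracy at $k=n-1$ (its closed form for $S_k$ also involves $(n-k-2)!$) and does not address it, so your treatment of that case is actually more careful than the original.
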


\begin{proof}
Taking into account that  $(2j)!!=2^j j!$ and $(2j+1)!!=\frac{(2j+1)!}{2^j j!}$, for every $j\in \NN$, and using Lemma \ref{lem1}, we have:
\begin{eqnarray*}
\frac{1}{(2n-5)!!}\sum_{i=1}^k c_i & = & \frac{(n-3)!}{4 (2n-5)!} \sum_{i=2}^k \frac{(i-1) 2^i (2n-i-4)! }{(n-i-1)!}
\\ & = & \frac{(n-3)!}{4 (2n-5)!} \sum_{i=1}^{k-1} \frac{i 2^{i+1} (2n-i-5)! }{(n-i-2)!}.
\end{eqnarray*}
We use now the method in \cite[Chap.~5]{PetWilZei96}  to compute $S_k = \sum_{i=1}^{k-1} \frac{i 2^{i+1} (2n-i-5)! }{(n-i-2)!}$. 

Set $t_i =  i 2^{i+1} (2n-i-5)! /(n-i-2)!$. Then
$$
\frac{t_{i+1}}{t_i}= \frac{2 (1+i) (2+i-n)}{i (5+i-2n)}.
$$ 
The next step is to find three polynomials $a(i), b(i)$ and $c(i)$ such that  
$$
\frac{t_{i+1}}{t_i}=\frac{a(i)}{b(i)}\cdot \frac{c(i+1)}{c(i)}.
$$
We take $a(i)=2 (2+i-n),$ $b(i)=5+i-2n$ and $c(i)=i$. Next, we have to find a polynomial $x(i)$ such that
$a(i) x(i+1)-b(i-1) x(i)=c(i)$.
The polynomial $x(i)=1$ satisfies this equation. Then, by \cite[Chap.~5]{PetWilZei96},
$$S_k = \frac{b(k-1) x(k)}{c(k)} t_k +g(n)=\frac{(4+k-2n) 2^{k+1} (2n-k-5)!}{(n-k-2)!}+g(n),$$ 
where $g$ is a function of $n$. We find this function from the case $k=2$:
$$
 \frac{4 (2 n-6)!}{(n-3)!}=S_2 =\frac{8 (6-2 n) (2 n-7)!}{(n-4)!}+g(n).
$$
From this equality we deduce that
$g(n)=\dfrac{4 (2 n-5)!}{(n-3)!}$.
We conclude that:
$$
S_k = \sum_{i=1}^{k-1} \frac{i 2^{i+1} (2n-i-5)! }{(n-i-2)!}=\frac{(4+k-2n) 2^{k+1} (2n-k-5)!}{(n-k-2)!} + \frac{4 (2 n-5)!}{(n-3)!}.
$$
The formula in the statement follows from this expression. \qed
\end{proof}

\begin{theorem}
$\dfrac{\mathrm{median}(n)}{\sqrt{4\ln(2)n}}  = 1+O\left(n^{-1/2}\right)$. In particular, $\displaystyle \lim_{n\to\infty} \frac{\mathrm{median}(n)}{\sqrt{4\ln(2)n}}=1.$
\end{theorem}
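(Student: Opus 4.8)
The plan is to read off the cumulative distribution from Lemma \ref{lem2} and locate the point where it crosses $1/2$. Writing $F(k)=\frac{1}{(2n-5)!!}\sum_{i>k}c_i$ for the tail, that lemma gives
$$
F(k) = \frac{2^{k}\,(n-3)!\,(2n-k-4)!}{2\,(2n-5)!\,(n-k-2)!},
$$
so $\mathrm{median}(n)$ is essentially the value of $k$ at which $F(k)$ passes below $1/2$. First I would record that the sequence $H(k):=2F(k)$ is strictly decreasing: a one-line computation gives $H(k+1)/H(k)=2(n-k-2)/(2n-k-4)<1$ for $k\geq 1$, while $H(1)=2$. Hence there is a well-defined integer median, characterized as the first $k$ with $H(k)\leq 1$; extending $H$ to a real variable via Gamma functions, the equation $H(k)=1$ has a unique real root $\median$, and the integer median differs from $\median$ by $O(1)$. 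It therefore suffices to analyze $\median$.

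Next I would pass to logarithms and telescope the factorials. From
$\ln H(k)=k\ln 2+\bigl[\ln(2n-k-4)!-\ln(2n-5)!\bigr]+\bigl[\ln(n-3)!-\ln(n-k-2)!\bigr]$,
each bracket is a sum of $k-1$ logarithms, and combining them yields the clean identity
$$
\ln H(k)=k\ln 2+\sum_{m=0}^{k-2}\ln\frac{n-3-m}{2n-5-m}.
$$
The crucial algebraic move is to factor each summand so that a genuinely small parameter appears: $\frac{n-3-m}{2n-5-m}=\frac12\bigl(1-\frac{1+m}{2n-5-m}\bigr)$, whence $\ln\frac{n-3-m}{2n-5-m}=-\ln 2+\ln\bigl(1-\frac{1+m}{2n-5-m}\bigr)$.

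In the regime $k=\Theta(\sqrt n)$ the ratio $\frac{1+m}{2n-5-m}$ is $O(n^{-1/2})$ uniformly over $0\leq m\leq k-2$, so I would Taylor-expand the logarithm. The $-\ln 2$ terms contribute $-(k-1)\ln 2$, cancelling all but one copy of $k\ln 2$, and the linear term $-\frac{1+m}{2n-5-m}\approx-\frac{1+m}{2n}$ sums to $-\frac{k(k-1)}{4n}$. This gives
$$
\ln H(k)=\ln 2-\frac{k(k-1)}{4n}+O(n^{-1/2}),
$$
and imposing $\ln H(\median)=0$ forces $\median(\median-1)=4n\ln 2+O(n^{1/2})$, i.e. $\median=\sqrt{4\ln(2)\,n}\,\bigl(1+O(n^{-1/2})\bigr)$, which is exactly the claim.

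The main obstacle is the uniform control of the error term. I must verify that the discarded pieces — the quadratic and higher terms of the logarithm, the subleading part of $\frac{1+m}{2n-5-m}$ beyond $\frac{1+m}{2n}$, and the replacement of $\median(\median-1)$ by $\median^2$ — all contribute only $O(n^{-1/2})$ to $\ln H$. Each such piece is of size $O(m^2/n^2)$ and is summed over $m$ up to $k-2=O(\sqrt n)$, producing a total of order $k^3/n^2=O(n^{-1/2})$; propagating this additive error on $\ln H$ through the quadratic relation $\median(\median-1)\approx 4n\ln 2$ converts it into the stated multiplicative error $1+O(n^{-1/2})$ on the median. Keeping these estimates uniform in $m$ across the whole summation range, and checking that the integer median inherits the asymptotics of the real root $\median$, is where the care is required.
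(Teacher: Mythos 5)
Your proposal is correct and follows essentially the same route as the paper: both start from the closed form in Lemma~\ref{lem2}, take logarithms to turn the factorial ratio into a sum of $k-1$ logarithmic terms, Taylor-expand each term (your factorization $\frac{n-3-m}{2n-5-m}=\frac12\bigl(1-\frac{1+m}{2n-5-m}\bigr)$ produces exactly the paper's expansion $\ln 2+\frac{j-2}{2n}+O(j^2/n^2)$ under $j=m+3$), and arrive at the same quadratic condition $\frac{\median(\median-1)}{4n}\approx\ln 2$ with error $O(\median^3/n^2)=O(n^{-1/2})$. Your additions — the monotonicity check on $H(k)$ and the explicit uniform error bookkeeping — tighten steps the paper leaves implicit, but the argument is the same one.
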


\begin{proof}
To simplify the notations, we shall denote $\mathrm{median}(n)$ by $\median$. By definition,
$$
\median = \max\Bigl\{k\in \NN\mid \sum_{i=1}^k c_i \leq \frac{|\TT_n|}{2} \Bigr\}\!
\! = \max \Bigl\{k\in \NN \mid \frac{2^k (n-3)! (-k+2 n-4)!}{2 (2 n-5)! (-k+n-2)!} \geq \frac{1}{2} \Bigr\}.
$$
Thus, $\median$ is the largest integer value such that
$$
2^{\median} (n-3)! (-\median +2 n-4)! \geq (2 n-5)! (-\median+n-2)!.
$$
If we simplify this inequation and take logarithms, this condition becomes
\begin{equation}
\median\ln(2)\geq \sum_{j=3}^{\median +1} \ln \left(\frac{2n-(j+2)}{n-j} \right)=\sum_{j=3}^{\median +1} \ln \left(\frac{2-\frac{j+2}{n}}{1-\frac{j}{n}} \right).
\label{EQMEDIAN}
\end{equation}
Combining the development of the function $\ln(\frac{2-(j+2){x}}{1-{j}{x}})$ in $x=0$,
$$
\ln \left(\frac{2-(j+2){x}}{1-{j}{x}}\right)=\ln(2)+\frac{1}{2} (j-2) x+\frac{1}{8} (j-2) (3 j+2) x^2+O\left(x^3\right),
$$
with equation~(\ref{EQMEDIAN}), we obtain: 
$$
\ln(2) \geq \frac{1}{2n}\sum_{j=3}^{\median +1} (j-2) + O\left(\frac{\median^3}{n^2}\right)=\frac{\median (\median -1)}{4 n}+ O\left(\frac{\median^3}{n^2}\right).
$$
So, the first order term of the median $\median$ will be the largest integer value that satisfies
${\median^2}/{4n}\leq\ln(2)$.
Therefore, the median will be the closest integer to $\sqrt{4\ln(2)n}$, from where the thesis in the statement follows.\qed
\end{proof}

\section{Conclusions}

We have obtained a limit formula for the median of the distance between two leaves in a fully resolved unrooted phylogenetic tree with $n$ leaves. Our method allows to find more terms of the development of the median. For instance, it can be proved that
$\mbox{median}(n)\approx \sqrt{4n\ln 2}+(\frac{1}{2}-\ln 2)$.

The limit formula obtained in this work can be generalized to the $p$-percentile 
$\percentilp = \max\Big\{k\in \NN\mid \sum_{i=1}^k c_i \leq {|\TT_n|}p\Big\}$.
Indeed, using our method we obtain that
$\percentilp \approx \sqrt{-4\ln (1-p) n}$.

\end{document}